
\documentclass[aps,pra,twocolumn,nofootinbib,10pt]{revtex4-1}
 \pdfoutput=1


\usepackage{graphicx}
\usepackage{amsmath}
\usepackage{amssymb}
\usepackage{mathrsfs}
\usepackage{amsthm}
\usepackage{bm}
\usepackage{url}
\usepackage[T1]{fontenc}
\usepackage{csquotes}
\MakeOuterQuote{"}


\newtheoremstyle{note}
  {\topsep/2}               
  {\topsep/2}               
  {}                      
  {\parindent}            
  {\itshape}              
  {.}                     
  {5pt plus 1pt minus 1pt}
  {}

\theoremstyle{note}
\newtheorem{theorem}{Theorem}
\newtheorem{lemma}{Lemma}

\newtheorem{corollary}{Corollary}

\theoremstyle{definition}

\theoremstyle{remark}
\newtheorem{remark}{Remark}



\newcommand{\mrm}[1]{\mathrm{#1}}
\newcommand{\tr}{\operatorname{tr}}

\newcommand{\rmi}{\mathrm{i}}
\newcommand{\rme}{\mathrm{e}}

\newcommand{\rmd}{\mathrm{d}}
\newcommand{\rmT}{\mathrm{T}}


\newcommand{\be}{\begin{equation}}
\newcommand{\ee}{\end{equation}}
\newcommand{\ba}{\begin{align}}
\newcommand{\ea}{\end{align}}

\def\<{\langle}  
\def\>{\rangle}  


 \newcommand{\caH}{\mathcal{H}}

\newcommand{\bbF}{\mathbb{F}}

\newcommand{\bbZ}{\mathbb{Z}}
\newcommand{\Sp}[2]{\mrm{Sp}(#1,#2)}

\newcommand{\SL}[2]{\mrm{SL}(#1,#2)}

\newcommand{\phw}{\overline{D}}

\newcommand{\pc}{\overline{\mathrm{C}}}

\newcommand{\pcr}{\overline{\mathrm{C}}_{\mrm{r}}}





\def\eqref#1{\textup{(\ref{#1})}}  
\newcommand{\eref}[1]{Eq.~\textup{(\ref{#1})}}
\newcommand{\Eref}[1]{Equation~\textup{(\ref{#1})}}
\newcommand{\esref}[1]{Eqs.~\textup{(\ref{#1})}}

\newcommand{\thref}[1]{Theorem~\ref{#1}}
\newcommand{\Thref}[1]{Theorem~\ref{#1}}
\newcommand{\thsref}[1]{Theorems~\ref{#1}}

\newcommand{\lref}[1]{Lemma~\ref{#1}}

\newcommand{\cref}[1]{Conjecture~\ref{#1}}
\newcommand{\Cref}[1]{Conjecture~\ref{#1}}

\newcommand{\rcite}[1]{Ref.~\cite{#1}}
\newcommand{\rscite}[1]{Refs.~\cite{#1}}


\begin{document}

\title{Multiqubit Clifford groups are unitary 3-designs}
\author{Huangjun Zhu}
\affiliation{Perimeter Institute for Theoretical Physics, Waterloo, On N2L 2Y5, Canada}
\affiliation{Institute for Theoretical Physics, University of Cologne, 
Cologne 50937, Germany}
\email{zhuhuangjun@fudan.edu.cn}
\email{zhuhuangjun@gmail.com}

\begin{abstract}
Unitary $t$-designs are a ubiquitous tool in many research areas, including randomized benchmarking, quantum process tomography, and scrambling. Despite the intensive efforts of many researchers, little is known about unitary $t$-designs with $t\geq3$ in the literature. We show that the multiqubit  Clifford group in any even prime-power dimension is not only a unitary 2-design, but also a  3-design. Moreover, it is a minimal  3-design except for dimension~4.
As an immediate consequence, any orbit of pure states of the multiqubit Clifford group forms a complex projective 3-design; in particular, the set of stabilizer states forms a 3-design. In addition, our study is  helpful to studying higher moments of the Clifford group, which are useful  in many research areas ranging from quantum information science to signal processing. Furthermore, we  reveal a surprising connection between unitary 3-designs and the physics of discrete phase spaces and thereby offer a simple explanation of why no discrete Wigner function is covariant with respect to the multiqubit Clifford group,  which is of intrinsic interest to studying quantum computation.
\end{abstract}

\date{\today}
\maketitle

\section{Introduction}

Unitary designs are a ubiquitous tool in quantum information science \cite{DiViLT02, Chau05,Dank05the, DankCEL09, GrosAE07, RoyS09, ClevLLW16}. They are particularly useful in derandomizing constructions that rely on random unitaries, such as
randomized benchmarking \cite{KnilLRB08,MageGE11,WallF14},   quantum process tomography \cite{Scot08,KimmL17},   quantum cryptography \cite{Chau05,AmbaBW09}, and data hiding \cite{DiViLT02}. In addition, they can generate complex projective designs \cite{ReneBSC04, Scot06, AmbaE07}, which are equally useful in derandomizing constructions that rely on random quantum states. Recently, projective and unitary designs have also found increasing applications beyond quantum information science, especially in the study of chaos and scrambling  \cite{HaydP07,SekiS08,HosuQRY16,RobeY17}. 

Most previous studies on this subject have focused on unitary 2-designs, among which the Clifford group is the most prominent \cite{Gott97the, Chau05,Dank05the,DankCEL09, Gros06, GrosAE07,KnilLRB08,MageGE11,WallF14, ClevLLW16} due  to its extensive applications in various research areas, such as quantum computation, quantum error correction, and randomized benchmarking. Complex projective 2-designs constructed from Clifford orbits, including the set of stabilizer states in particular, are also of special interest \cite{Gott97the, Gros06,HowaWVE14}. 
By contrast, little is known about  $t$-designs with $t\geq3$ except for randomized constructions \cite{AmbaE07,HarrL09, BranHH16, CwikHMP13, NakaHKW17}, despite the intensive efforts of many researchers in the past decade. This situation  has set a big barrier in realizing many tasks that rely on higher $t$-designs, such as quantum state discrimination \cite{AmbaE07,MattWW09}, quantum  tomography  \cite{HayaHH05,GrosLFB10, KimmL17}, phase retrieval  \cite{GrosKK15,KuenRT17}, and reduction of query complexity \cite{BranH13}.

Here we show that
the multiqubit (including single-qubit) Clifford group  is not only a unitary 2-design, but also  a 3-design. Moreover, it is  \emph{minimal} except for dimension~4 in the sense that it does not contain any proper subgroup that is also a unitary 3-design.
As a consequence, any orbit of pure states of the multiqubit Clifford
group, including the set of stabilizer states in particular, forms a 3-design, which extends the result in \rcite{KuenG13}. Our study not only  provides   infinite families of well-structured  3-designs, but also 
paves the way for constructing  $t$-designs with even higher strengths \cite{ZhuKGG16}. Recently, these results have  found satisfactory applications in  quantum state discrimination \cite{KuenZG16D} and phase retrieval~\cite{KuenZG16L}. Furthermore,  our work is helpful to  studying multipartite entanglement in stabilizer tensor networks, which stand as an effective tool for understanding holographic  duality \cite{NezaW16}.

In addition, our
study  leads to a simple explanation of the distinction
between discrete Wigner functions in even prime-power dimensions and those in odd prime-power dimensions \cite{Woot87,Gros06,Zhu16P}. 
This distinction  has been an elusive question and has profound implications for various interesting subjects, such as  computational speedup and contextuality \cite{VeitFGE12, HowaWVE14,DelfABR15}. In each odd prime-power dimension, the discrete Wigner function introduced by Wootters \cite{Woot87} is covariant with respect to the Clifford group \cite{Gros06,Zhu16P}; by contrast,  none is covariant with respect to the multiqubit Clifford group \cite{Zhu16P}. Here we reveal a surprising connection between unitary 3-designs and the physics of discrete phase spaces and thereby clarify the reason behind this distinction.

\section{Preliminaries}

A set of pure quantum states $\{|\psi_j\rangle\}$ in a $d$-dimensional Hilbert space $\caH$  is a (complex projective) \emph{t-design}  for a positive integer $t$ if $\sum_j (|\psi_j\rangle\langle \psi_j|)^{\otimes t}$ is proportional to the projector onto the symmetric subspace of  $\caH^{\otimes t}$ \cite{AmbaE07, ReneBSC04, Scot06, ApplFZ15G}. A set of $K$ unitary operators $\{U_j\}$ acting on $\caH$ is a  \emph{unitary $t$-design} \cite{Dank05the, DankCEL09, GrosAE07} if it satisfies
\begin{equation}\label{eq:U2design}
\frac{1}{K} \sum _j U_j^{\otimes t} M(U_j^{\otimes t})^\dag =\int \rmd U U^{\otimes t}M(U^{\otimes t})^\dag
\end{equation}
for any linear operator $M$ acting on $\caH^{\otimes t}$, where $\dag$ stands for the Hermitian conjugate and
the integral is taken  with respect to the normalized Haar measure. By definition, a unitary $t$-design is also a $t^\prime$-design
for $t^\prime<t$. Note that the above equation remains intact when $U_j$ are multiplied by arbitrary phase factors, so what we are concerned with are actually projective unitary $t$-designs.  Alternatively, the set $\{U_j\}$ is a unitary $t$-design if the $t$th \emph{frame potential}
\begin{equation}
\Phi_t(\{U_j\}):=\frac{1}{K^2}\sum_{j,k} |\tr(U_jU_k^\dag)|^{2t}
\end{equation}
attains the minimum $\gamma(t,d):=\int \rmd U |\tr(U)|^{2t}$ \cite{GrosAE07, Scot08, RoyS09}.
Here we only need  $\gamma(t,d)$ in  two special cases \cite{Rain98,Scot08},
\begin{equation}\label{eq:FramePmin}
\gamma(t,d)=\begin{cases}
\frac{(2t)!}{t! (t+1)!}, &d=2,\\
t!,& d\geq t.
\end{cases}
\end{equation}
Besides the current application,  frame potentials  
also play an important role in  studying chaos and circuit complexity \cite{HaydP07,SekiS08,HosuQRY16,RobeY17}.

Most known examples of unitary designs are constructed from subgroups of the unitary group, which are referred to as (unitary) group designs.
Given a finite group~$G$ of unitary operators on $\caH$, in most cases we are only concerned with the quotient $\overline{G}$ of $G$ over the phase factors.
The frame potential of  $\overline{G}$ takes on the form \cite{GrosAE07}
\begin{equation}
\Phi_t(\overline{G}):=\frac{1}{|\overline{G}|}\sum_{U\in \overline{G}} |\tr(U)|^{2t}=\frac{1}{|G|}\sum_{U\in G} |\tr(U)|^{2t},
\end{equation}
where $|\overline{G}|$ and $|G|$ denote the orders of $\overline{G}$ and $G$. 
Note that $\Phi_t(\overline{G})$ coincides with  the sum of squared multiplicities of irreducible components of $\overline{G}_{(t)}:=\{U^{\otimes t}|U\in \overline{G}\}$. The group $\overline{G}$ is a unitary $t$-design if and only if $\overline{G}_{(t)}$ has the same number of irreducible components as $\mathrm{U}_{(t)}$, where $\mathrm{U}$ denotes the group of all unitary operators acting on $\caH$~\cite{GrosAE07}.
For example, the group  $\overline{G}$ is a unitary 1-design if and only if it is irreducible. 
It is a  2-design if  $\overline{G}_{(2)}$ has two irreducible components, which correspond to the symmetric subspace and antisymmetric subspace of $\caH^{\otimes2}$. Prominent examples of unitary group 2-designs include Clifford groups and restricted Clifford groups in prime-power dimensions \cite{DiViLT02, Chau05, Dank05the, DankCEL09, GrosAE07}.  Not much is known  about unitary $t$-designs with larger  $t$. 

Before presenting our main results, we need to introduce
the (multipartite) \emph{Heisenberg-Weyl} (HW) group.
In prime dimension $p$,  the HW group $D$ is
generated by the phase operator $Z$
 and the cyclic-shift operator $X$,
\begin{equation}\label{eq:HW}
Z|u\rangle=\omega^u |u\rangle, \quad X|u\rangle=|u+1\rangle,
\end{equation}
where $\omega=\mathrm{e}^{2\pi \mathrm{i}/p}$,
$u\in \bbF_p$, and $\bbF_p$  is the field
of integers modulo $p$ (often written as $\bbZ_p$). When $p=2$, the operators $Z$
and  $X$ reduce to the familiar Pauli operators $\sigma_z$ and $\sigma_x$, and the HW group reduces to the Pauli group. In this case, it is often more convenient to consider a variant of  the HW group which includes the scalar $\rmi$ as an additional generator. However, these choices do not affect the following discussion since we are mostly concerned with the HW group modulo phase factors.

In prime-power dimension $q=p^n$, the  HW group $D$ is the tensor power of $n$ copies of the HW group in  dimension $p$. The elements in the HW group are called displacement operators (or Weyl operators). Up to phase factors, they can be labeled by  vectors  in $\mathbb{F}_p^{2n}$ as 
\begin{equation}
D_{\mu}= \tau^{\sum_j \mu_j \mu_{n+j} }\prod_{j=1}^n X_j^{\mu_j}  Z_j^{\mu_{n+j}},
\end{equation}
where  $\tau=-\rme^{\pi \rmi/p}$,  while $Z_j$ and $X_j$ are the phase operator and cyclic shift operator of the $j$th party. These operators satisfy the commutation relation
\begin{equation}
D_{\mu}D_{\nu} D_{\mu}^\dag D_{\nu}^\dag  =\omega^{\langle\mu,\nu\rangle},
\end{equation}
where $\langle\mathbf{\mu},\mathbf{\nu}\rangle=\mu^\rmT J\nu$ is the symplectic product with
$J=\bigl(\begin{smallmatrix}0_n &-1_n\\ 1_n& 0_n
\end{smallmatrix}\bigr)$.
  The symplectic group $\Sp{2n}{p}$ is the group of linear transformations on $\mathbb{F}_p^{2n}$ that preserves the symplectic product.

The (full) \emph{Clifford group} $\pc$ is  composed of all unitary transformations that map displacement operators to displacement operators up  to phase factors \cite{BoltRW61I,BoltRW61II,Gott97the,Gros06,Zhu15Sh,Zhu16P}. It  is referred to as the multiqubit Clifford group when the dimension is a power of 2 (including 2). In dimension 2, the Clifford group $\pc$ corresponds to the symmetry group of a cube inscribed in the Bloch sphere. In general, any Clifford unitary $U$ induces a symplectic transformation $F$ on the symplectic space $\mathbb{F}_p^{2n}$ that labels the displacement operators. Conversely, given any symplectic matrix $F$, there exist $q^2$ Clifford unitaries (up to phase factors) that induce  $F$ \cite{BoltRW61I,BoltRW61II,Zhu15Sh}. The quotient  $\pc/\phw$ can be identified with the symplectic group $\Sp{2n}{p}$ \cite{BoltRW61I,BoltRW61II}.

The symplectic space $\bbF_p^{2n}$ can  also be identified with a two-dimensional vector space over the field $\bbF_q$. The special linear group $\SL{2}{q}$ on this space is an extension-field-type subgroup of $\Sp{2n}{p}$.
The  \emph{restricted Clifford group} $\pcr$ (coinciding with the full Clifford group when $q$ is a prime) is the subgroup of $\pc$ whose quotient $\pcr/\phw$ corresponds to $\SL{2}{q}$; see \rscite{Gros06, GrosAE07, Appl09P, Zhu15M} for more details.

\section{Multiqubit Clifford groups are unitary 3-designs}
In this section we prove our main result that the multiqubit Clifford group is a unitary 3-design. Consequently, any orbit of the Clifford group, including the orbit of stabilizer states, forms a complex projective 3-design. To achieve this goal, we determine the frame potentials  of the Clifford group up to order 4. Furthermore, we show that, except in dimension 4, the multiqubit Clifford group contains no proper subgroup that forms a unitary  3-design. Recently,  these results have  found  applications in many research areas both within and beyond quantum information science.

\begin{theorem}\label{thm:Clifford3design}
The multiqubit Clifford group is a unitary 3-design but not a 4-design. The Clifford group in any odd prime-power dimension is only a unitary 2-design. 
The restricted Clifford group in any prime-power dimension is only a unitary 2-design  except for dimension 2.
\end{theorem}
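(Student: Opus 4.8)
The plan is to reduce each of the three claims to a computation of the dimension of a commutant, using the frame-potential criterion recalled above: since $\Phi_t(\overline G)$ equals the sum of squared multiplicities of the irreducible constituents of $\overline G_{(t)}$, it is the dimension of the commutant of $\{U^{\otimes t}:U\in\overline G\}$ on $(\mathbb C^d)^{\otimes t}$, and $\overline G$ is a unitary $t$-design precisely when this dimension attains the minimum $\gamma(t,d)$. The relevant targets are $\gamma(3,d)=3!=6$ for $d=2^n\ge4$, $\gamma(3,2)=5$, and $\gamma(4,d)=4!=24$ for $d\ge4$ (with $\gamma(4,2)=14$). Throughout I would use the two bounds $\Phi_t(\overline G)\ge\gamma(t,d)$ (minimality of the frame potential) and $\Phi_t(\overline G)\le(\text{number of relevant orbits})$, explained next.

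To compute the commutant I would exploit $\hw\trianglelefteq\pc$ with $\pc/\phw\cong\Sp{2n}{p}$. Because the displacement operators form an operator basis, the products $D_{a_1}\otimes\cdots\otimes D_{a_t}$ indexed by $\vec a=(a_1,\dots,a_t)\in(\bbF_p^{2n})^t$ form a basis of the endomorphisms of $(\mathbb C^d)^{\otimes t}$, and conjugation by $D_\mu^{\otimes t}$ multiplies the $\vec a$-term by $\omega^{\langle\mu,\sum_i a_i\rangle}$; hence the commutant of $\hw^{\otimes t}$ is spanned by the basis vectors with $\sum_i a_i=0$. A Clifford unitary inducing $F\in\Sp{2n}{p}$ sends such a vector to a phase times the one indexed by $F\vec a=(Fa_1,\dots,Fa_t)$, so the Clifford commutant is the fixed space of a twisted $\Sp{2n}{p}$-action on $\{\vec a:\sum_i a_i=0\}$. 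Its dimension is the number of orbits carrying the trivial stabilizer phase-character, each contributing the single invariant $\sum_{\vec a\in\text{orbit}}D_{a_1}\otimes\cdots\otimes D_{a_t}$; in particular it is at most the total number of orbits.

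For $t=3$ the relation $a_1+a_2+a_3=0$ leaves $(a_1,a_2)$ free, and Witt's theorem reduces the orbit classification to: the zero triple; three triples with exactly one coordinate zero; the collinear nonzero triples, parametrized by a ratio $\lambda\in\bbF_p\setminus\{0,-1\}$; and the plane-spanning triples, parametrized by the symplectic product $\langle a_1,a_2\rangle$. In characteristic $2$ the collinear family is empty (a collinear nonzero triple summing to zero is impossible, as $3a=a\ne0$), so for $d=2^n\ge4$ there are exactly $1+3+0+2=6$ orbits and for $d=2$ exactly $5$; combined with $\Phi_3\ge\gamma(3,d)$ this squeeze forces equality, establishing the multiqubit $3$-design property with no phase computation at all. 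For odd $p$ the collinear family is nonempty, producing orbits absent in characteristic $2$; showing that at least one of them survives the phase-character test gives $\Phi_3>\gamma(3,d)$, hence only a $2$-design. I expect this phase-character analysis—governed by $\tau=-\rme^{\pi\rmi/p}$ and the quadratic (Gauss-sum) nature of the Clifford phases, and genuinely parity-dependent—to be the main obstacle, since the clean even-dimensional argument deliberately sidesteps it whereas the strict inequalities for odd $p$ and for $t=4$ do not.

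To show the multiqubit group is not a $4$-design I would exhibit an explicit extra invariant: in characteristic $2$ one has $4a=0$ for all $a$, so $\sum_{a}D_a^{\otimes4}$ lies in the $\hw$-commutant, and because the multiqubit Clifford phases are fourth roots of unity it is manifestly Clifford-invariant; checking that it is linearly independent of the operators permuting the four tensor factors (which span the full-unitary commutant, of dimension $\gamma(4,d)$) then gives $\Phi_4(\overline\pc)>\gamma(4,d)$. Finally, for the restricted Clifford group I would use monotonicity: since $\pcr\subseteq\pc$, one has $\Phi_3(\overline\pcr)\ge\Phi_3(\overline\pc)$, with equality only if no new invariant appears. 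For $q>2$ the group $\SL{2}{q}$ preserves the $\bbF_q$-linear structure and hence has strictly finer orbits on the triples than $\Sp{2n}{p}$ (for instance it separates bases by their $\bbF_q$-determinant in $\bbF_q^\times$), producing a surviving orbit-sum that is not $\pc$-invariant; thus $\Phi_3(\overline\pcr)>\gamma(3,d)$ and it is only a $2$-design. The case $q=2$ is the lone exception because there $\SL{2}{2}=\Sp{2}{2}$, so the restricted group coincides with the single-qubit Clifford group, already shown to be a $3$-design.
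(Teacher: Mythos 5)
Your commutant picture is sound, and your orbit classification of $\Sp{2n}{p}$ on triples summing to zero (equivalently, on pairs, via Witt's lemma) reproduces the paper's counts: $5$ orbits for $d=2$, $6$ for $d=2^n\geq 4$, and $2p+1$ or $2p+2$ for odd $p$. The squeeze $\gamma(3,d)\le\Phi_3\le\#\{\text{orbits}\}$ therefore does establish the multiqubit $3$-design claim. The genuine gap is in the other two sentences of the theorem and in the ``not a $4$-design'' clause: these all require \emph{lower} bounds $\Phi_t>\gamma(t,d)$, and your inequality $\Phi_t\le\#\{\text{orbits}\}$ points the wrong way. In your framework you must certify that enough orbits pass the ``phase-character test,'' which you explicitly defer as the main obstacle. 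For odd $p$ you assert without argument that at least one collinear orbit survives; for the restricted Clifford group you assert without argument that some finer $\SL{2}{q}$-orbit sum survives and is not already accounted for. (Your $t=4$ candidate $\sum_a D_a^{\otimes 4}$ is a nice concrete invariant, and its Clifford invariance is fine since multiqubit displacement operators conjugate to $\pm$ displacement operators, but its linear independence from the permutation operators --- whose span is only $14$-dimensional when $d=2$ --- is also left unchecked.)

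The paper closes exactly this gap with its Lemma~1: summing $|\tr(U_F D_\mu)|^{2t}$ over the coset $U_F\phw$ gives $q^2 f(F)^{t-1}$, because $|\tr(U_F D_\mu)|^2$ equals either $0$ or the fixed-point number $f(F)$ while the squared traces sum to $q^2$; the orbit-counting lemma then yields $\Phi_t(\overline G)=\#\{\text{orbits of }R\text{ on }(\bbF_p^{2n})^{\times(t-1)}\}$ \emph{exactly}. In other words, every orbit contributes an invariant --- there is no phase-character test to pass --- and the computation is phase-insensitive from the start because it only ever involves $|\tr|^2$. With that identity your orbit counts immediately give $\Phi_3=2p+1$ or $2p+2>6$ for odd $p$, the values $\Phi_4(\pc)=15,29,30>\gamma(4,2^n)$ for $n=1,2,\geq 3$, and (via the conjugacy classes of $\SL{2}{q}$) $\Phi_3(\pcr)=2q+1>\gamma(3,q)$ for $q>2$. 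So your approach is essentially the paper's orbit count dressed as a commutant computation, but it is missing the one identity that upgrades the orbit count from an upper bound to an equality; without it, only the first half of the first sentence of the theorem is actually proved.
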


\begin{corollary}
	Any orbit of pure states of the multiqubit Clifford group forms a 3-design; in particular, the set of multiqubit stabilizer states forms a 3-design.
\end{corollary}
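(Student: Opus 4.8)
The plan is to obtain the corollary as a direct translation of \thref{thm:Clifford3design} into the language of complex projective designs, using the standard fact that applying a unitary $t$-design to any fixed fiducial state yields a projective $t$-design.

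First I would fix an arbitrary pure state $\ket{\psi}$ in dimension $d=q$ and let $\{U_j\}$ be a set of Clifford unitaries representing the quotient $\pc$. To prove that the orbit $\{U_j\ket{\psi}\}$ is a $3$-design it suffices, by the definition recalled at the beginning, to show that the orbit average $\frac{1}{K}\sum_j (U_j\outer{\psi}{\psi}U_j^\dag)^{\otimes 3}$ is proportional to the projector $P_{\mrm{sym}}$ onto the symmetric subspace of the $3$-partite Hilbert space.

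Next I would invoke the defining identity \eref{eq:U2design} of a unitary $3$-design with the particular choice $A=(\outer{\psi}{\psi})^{\otimes 3}$. Since the multiqubit Clifford group is a unitary $3$-design by \thref{thm:Clifford3design}, the two sides of \eref{eq:U2design} agree, giving
\begin{equation}
\frac{1}{K}\sum_j \bigl(U_j\outer{\psi}{\psi}U_j^\dag\bigr)^{\otimes 3}=\int \rmd U \bigl(U\outer{\psi}{\psi}U^\dag\bigr)^{\otimes 3}.
\end{equation}
It then remains to identify the Haar integral on the right. By invariance of the Haar measure it commutes with $V^{\otimes 3}$ for every unitary $V$, and since it is supported on the symmetric subspace, Schur's lemma forces it to equal a scalar multiple of $P_{\mrm{sym}}$; taking traces fixes the scalar as $\binom{d+2}{3}^{-1}$. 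Hence the orbit average is proportional to $P_{\mrm{sym}}$, so every orbit of pure states of the multiqubit Clifford group is a $3$-design.

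Finally, for the statement about stabilizer states I would recall that the Clifford group acts transitively on the set of stabilizer states, so that this set is a single Clifford orbit, namely the orbit of a computational basis state $\ket{0}^{\otimes n}$; it therefore inherits the $3$-design property from the general orbit statement. The only steps needing care are the classical identification of the Haar average with a multiple of $P_{\mrm{sym}}$ and the transitivity of the Clifford action on stabilizer states. Both are standard, so I expect no real obstacle: the corollary is essentially immediate once \thref{thm:Clifford3design} is in hand.
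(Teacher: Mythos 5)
Your proposal is correct and follows exactly the route the paper intends: the corollary is stated as an immediate consequence of Theorem~\ref{thm:Clifford3design}, obtained by applying the unitary $3$-design identity to $A=(\outer{\psi}{\psi})^{\otimes 3}$ and identifying the Haar average with a multiple of the symmetric projector, with the stabilizer states handled as a single Clifford orbit. The paper gives no further detail beyond citing the direct construction of Kueng and Gross for the stabilizer case, so your write-up simply makes the standard argument explicit.
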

The conclusion on  stabilizer states was also proved directly by  Kueng and  Gross~\cite{KuenG13}.

\Thref{thm:Clifford3design} is a simple corollary  of  \eref{eq:FramePmin} and the following lemma, which  is proved in the appendix by virtue of \lref{lem:FPformula} below.
\begin{lemma}\label{lem:CliffordFP}
In prime-power dimension $p^n$, the Clifford group $\pc$ has frame potentials
\begin{align}
\Phi_2(\pc)&=2; \label{eq:CliffordFP2}\\
\Phi_3(\pc)&=
\begin{cases}
2p+1, & n=1,\\
2p+2, & n\geq 2;
\end{cases}   \label{eq:CliffordFP3} \\
\Phi_4(\pc)&=
\begin{cases}
p^3+p^2+p+1, & n=1,\\
2p^3+2p^2+2p+1, & n= 2,\\
2(p^3+p^2+p+1), & n\geq 3.\\
\end{cases}  \label{eq:CliffordFP4}
\end{align}
The restricted Clifford group $\pcr$ has frame potentials
 \begin{align}\label{eq:RCliffordFP}
 \Phi_t(\pcr)&=\frac{q(q^{2t-4}-1)}{q^2-1}+q^{t-2}+1\quad \forall t\geq 1.
 \end{align}
\end{lemma}
It is worth pointing out that \esref{eq:CliffordFP2} to \eqref{eq:CliffordFP4} also apply to subgroups of the Clifford group whose quotients over the HW group are isomorphic to $\Sp{2m}{p^k}$ with $mk=n$ if $n$ and $p$ are replaced by  $m$ and   $p^k$, respectively. 
Besides proving \thref{thm:Clifford3design}, \lref{lem:CliffordFP}  shows that  the Clifford group in dimension 2 is quite close to a unitary 4-design.  The one in dimension 3 is quite close to a unitary 3-design; the larger the prime $p$ is, the further away the Clifford group  is from being a unitary 3-design. In addition, the frame potentials presented in \lref{lem:CliffordFP} are 
crucial to analyzing the fourth tensor power of the Clifford group  and to constructing  4-designs from Clifford orbits \cite{ZhuKGG16,HelsWW16}, which are  useful in  quantum state discrimination \cite{KuenZG16D} and phase retrieval~\cite{KuenZG16L}.  Recently, these results also found an application in  studying multipartite entanglement in stabilizer tensor networks, which are instructive to  understanding holographic  duality \cite{NezaW16}. Furthermore, our study is helpful to exploring chaos and circuit complexity \cite{HosuQRY16,RobeY17}.

The following lemma is useful not only in proving \lref{lem:CliffordFP}, but also in computing frame potentials of  subgroups of the Clifford group that contain the HW group. See the appendix for a proof. 
\begin{lemma}\label{lem:FPformula}
Suppose $\overline{G}\geq\phw$ is a subgroup of the Clifford group $\pc$ in dimension $q=p^n$ and $R=\overline{G}/\phw$ (taken as  a subgroup of $\Sp{2n}{p})$. Then
\begin{equation}\label{eq:FPformula}
\Phi_t(\overline{G})=\frac{1}{|R|}\sum_{F\in R} f(F)^{t-1},
\end{equation}
where $|R|$ is the order of $R$ and
 $f(F)$ is the number of fixed points of $F$ in  $\bbF_p^{2n}$.
Moreover,  $\Phi_t(\overline{G})$ is equal to the number of orbits of $R$ on $(\bbF_p^{2n})^{\times (t-1)}$.   The group $\overline{G}$ is a unitary 2-design if and only if $R$ is transitive on  $\bbF_p^{2n*}$.  It  is a unitary
3-design if and only if $R$ is 2-transitive  when $n=1$ and is a rank-3 permutation
group  when $n\geq2$.
\end{lemma}

\begin{remark}
$\bbF_p^{2n*}$ is the set of nonzero vectors in $\bbF_p^{2n}$. A subgroup of $\Sp{2n}{p}$ is \emph{transitive} if it can map  any nonzero vector in $\bbF_p^{2n}$ to any other and \emph{2-transitive} or  doubly transitive if it can map any ordered pair of distinct nonzero vectors to any other pair. It is a \emph{rank-3 permutation group} if it is transitive and each point stabilizer
has three orbits  on  $\bbF_p^{2n*}$ including the orbit of the fixed point \cite{DixoM96book,Came99book,Zhu15S}. The relation between transitive subgroups of the symplectic group and unitary 2-designs was  noticed previously in \rcite{GrosAE07}.
\end{remark}

In  many applications, unitary designs with fewer elements are desirable. Is there any  proper subgroup of the multiqubit Clifford group that forms a 3-design? The answer turns out to be negative except for dimension~4. The following theorem is proved in the appendix. It shows that in a sense the multiqubit Clifford group is the most economical in constructing a unitary 3-design.
\begin{theorem}\label{thm:Clifford3designM}
The multiqubit Clifford group $\pc$  is a minimal unitary 3-design except for dimension 4, in which case it has  a unique proper subgroup that is  a unitary 3-design. 
\end{theorem}
\begin{remark}In  dimension 4, $\pc/\phw\simeq\Sp{4}{2}\simeq S_6$  contains a unique subgroup that is isomorphic to $A_6$ \cite{Wils09book}, where $S_m$ and $A_m$ denote  the symmetric group and alternating group on $m$ letters. The preimage of  $A_6$ in $\pc$ is  a unitary 3-design.   
\end{remark}

\section{Applications to discrete Wigner functions}

Discrete Wigner functions are the analogs of the familiar Wigner function in the continuous scenario. They are useful in many research areas, including quantum tomography and  quantum computation.  In each odd prime-power dimension, 
the Wootters discrete Wigner function
is distinguished 
because it is covariant with respect to the Clifford group \cite{Woot87,Gros06,Zhu16P}. 
In this quasiprobability representation, Clifford transformations can be understood as permutations on the discrete phase space. In addition,  a pure state has a nonnegative Wootters discrete Wigner function if and only if it is a stabilizer state according to the discrete Hudson theorem \cite{Gros06}. In particular,
stabilizer states can be represented as probability distributions on the discrete phase space. These facts offer a simple explanation of the famous Gottesman-Knill theorem which states that stabilizer quantum computation can be efficiently simulated  classically \cite{NielC00book}. In other words, negativity in the Wootters discrete Wigner function is a necessary resource to achieve universal quantum computation \cite{VeitFGE12}. Incidentally, this negativity  is also tied to the prominent nonclassical phenomenon known as contextuality \cite{HowaWVE14}.

In the multiqubit setting, which is the most relevant to realizing practical quantum computation, however, no discrete Wigner function is covariant with respect to the Clifford group \cite{Zhu16P}. Consequently, it is more difficult to come up with a simple geometric picture that illustrates the Gottesman-Knill theorem. Also, it is more difficult to clarify the origin of computational speedup in quantum computation based on qubits. A focus of ongoing research is to understand the distinction between multiqubit systems and systems of odd local dimensions \cite{VeitFGE12, HowaWVE14,DelfABR15,Zhu16P}.

Here we  show that the nonexistence of a Clifford covariant discrete Wigner function in an even prime-power dimension
 is closely tied to the fact that the multiqubit Clifford group is a unitary 3-design. To elucidate this point, it suffices to show that no operator basis is covariant with respect to the multiqubit Clifford group, note that any Clifford covariant discrete Wigner function determines a Clifford covariant operator basis.
For example, in each odd prime-power dimension, the Wootters discrete Wigner function  \cite{Woot87} determines  the operator basis composed of phase point operators, and vice versa \cite{Gros06,Zhu16P}. Here an operator basis $\{L_j\}$ is \emph{covariant} with respect to the group $\overline{G}$ of unitary transformations if $\overline{G}$ leaves this basis invariant and acts transitively on  the basis operators. In particular, each $U\in
\overline{G}$ induces  a permutation among the basis operators.

\begin{theorem}\label{lem:u3design}
No operator basis is covariant with respect to any unitary group 3-design. No discrete Wigner function is covariant with respect to the multiqubit Clifford group.
\end{theorem}
This theorem is proved in the appendix. It
offers a simple explanation of the distinction between multiqubit systems and systems of odd local dimensions, which is of intrinsic interest to studying quantum computation.
 Moreover, it reveals a surprising connection between unitary $t$-designs and the physics of discrete phase spaces, which may have profound implications for the cross fertilization of the two active research fields.

\section{Summary}

We showed that the multiqubit Clifford group  is  a unitary 3-design. It is also a minimal 3-design except for dimension 4. As a consequence, any orbit of pure states of the multiqubit Clifford group is a 3-design; in particular, the set of multiqubit stabilizer states is a 3-design. The methods and  conclusions presented here are also useful in
studying higher moments of the Clifford group. These results 
are of interest to many research areas both within and beyond quantum information science.

Moreover, we  offered a simple explanation of why no discrete Wigner function is covariant with respect to the multiqubit Clifford group by proving that no operator basis is covariant with respect to any group that forms a unitary 3-design.
This result reveals a surprising connection between unitary designs and the physics of discrete phase spaces, which  is of  interest to studying quantum computation and a number of  nonclassical phenomena, such as negativity and contextuality.

Note added. Upon completion of this work, we noticed a comprehensive  math paper by Robert M. Guralnick and Pham Huu Tiep \cite{GuraT05}, from which it is possible to deduce our \thsref{thm:Clifford3design} and~\ref{thm:Clifford3designM} with some additional work. However, this paper mentions neither   $t$-designs nor the Clifford group explicitly.   In addition, some of their results rely on  Hering's theorem, which relies on the classification of finite simple groups (CFSG). Our proofs are completely independent of the CFSG and are thus simpler and more transparent. Recently (Sep 2015), unaware of our work (our  draft without \thref{thm:Clifford3designM} was completed in May 2015 and shared with a number of experts in the field), Zak Webb also proved  that the multiqubit Clifford group is a unitary 3-design (published by now \cite{Webb16}), which offers a complementary perspective to our approach.

\acknowledgments
The author is grateful to David Gross,   Richard Kueng, Debbie Leung, and Andreas Winter
for discussions.
This work was supported in part by Perimeter Institute for Theoretical Physics.
Research at Perimeter Institute is supported by the Government of Canada
through Industry Canada and by the Province of Ontario through the Ministry
of Research and Innovation.  The author also acknowledges financial support
from the Excellence
Initiative of the German Federal and State Governments
(ZUK~81) and the DFG. 

\appendix

\section{Proof of \lref{lem:CliffordFP}}
\begin{proof}According to \lref{lem:FPformula}, the frame potential $\Phi_t(\pc)$ is equal to the number of orbits of $\Sp{2n}{p}$ on  $(\bbF_p^{2n})^{\times (t-1)}$. The number is 2 when $t=2$ given that $\Sp{2n}{p}$ is transitive \cite{DixoM96book, Came99book}. When $t=3$, let $\mathbf{0}=(0,0,\ldots, 0)^\rmT$ and $\mathbf{1}=(1,0,\ldots, 0)^\rmT\in \bbF_p^{2n}$. Then any orbit on $(\bbF_p^{2n})^{\times 2}$ contains one of the following elements $(\mathbf{0}, \mathbf{0})$, $(\mathbf{0}, \mathbf{1})$, $(\mathbf{1}, \mathbf{0})$, $(\mathbf{1}, \mathbf{1})$, and $(\mathbf{1}, \mu)$, where $\mu\neq \mathbf{0}, \mathbf{1}$. The vector $\mu$ is a fixed point of the stabilizer of $\mathbf{1}$ if and only if it is proportional to $\mathbf{1}$; there are $p-2$ such fixed points excluding $\mathbf{0}, \mathbf{1}$.  Suppose  $\mu,\nu\in\bbF_p^{2n*}$ are not proportional to $\mathbf{1}$, then $(\mathbf{1}, \mu)$ and $(\mathbf{1}, \nu)$ are on the same orbit if and only if the symplectic products $\langle \mathbf{1}, \mu\rangle$ and $\langle \mathbf{1}, \nu\rangle$ are equal by Witt's lemma \cite{Came00book}. When $n>1$, $\langle \mathbf{1}, \mu\rangle$ may take on any value in $\bbF_p$, while it is nonzero when $n=1$. So there are $2p+1$ orbits in total when $n=1$ and $2p+2$ orbits  when $n>1$, from which we deduce \eref{eq:CliffordFP3}. \Eref{eq:CliffordFP4} and  frame potentials of higher orders can be derived using a similar reasoning.

	For the restricted Clifford group, the summation over $f(F)^{t-1}$ in \eref{eq:FPformula} can be evaluated explicitly.  \Eref{eq:RCliffordFP} follows from the fact that $f(F)=q$ for the $q^2-1$
	order-$p$ elements in $\SL{2}{q}$ and $f(F)=1$ for other $q^3-q^2-q$ nonidentity elements  (see  \rscite{Hump75,ApplBC08, Zhu10} for the conjugacy classes of  $\SL{2}{q}$). 
\end{proof}

\section{Proof of \lref{lem:FPformula}}

\begin{proof}
	Let $F\in R$ and $U_F$ be a Clifford unitary that induces the transformation $F$; then $U_F D_\mu$ induces the same transformation for all $\mu\in \bbF_p^{2n}$.
	According to a similar argument used in the proof of Theorem 2.34 in Zauner's thesis \cite{Zaun11} (see also \rcite{Zhu12the}), $|\tr(U_FD_\mu)|^2$ is either zero or equal to the number of  displacement operators that commute with $U_F$, which in turn is  equal to the number  $f(F)$ of fixed points of $F$ in $\bbF_p^{2n}$. On the other hand,  $\sum _\mu |\tr(U_F D_\mu)|=q^2$ given that the HW group is a unitary error basis.
	It follows  that $|\tr(U_FD_\mu)|^2=f(F)$ for $q^2/f(F)$ of the  $q^2$ displacement operators $D_\mu$. Therefore,
	\begin{align}
	&\Phi_t(\overline{G})=\frac{1}{q^2|R|} \sum_{F\in R} \sum _\mu |\tr(U_FD_\mu)|^{2t} \nonumber \\
	&=\frac{1}{q^2|R|}\sum_{F\in R} f(F)^{t}\frac{q^2}{f(F)}=\frac{1}{|R|} \sum_{F\in R} f(F)^{t-1}.
	\end{align}
	According to the orbit-stabilizer relation,  $\Phi_t(\overline{G})$ is equal to the number of orbits of $R$ on $(\bbF_p^{2n})^{\times (t-1)}$.

	In view of   \eref{eq:FramePmin} and \lref{lem:FPformula}, the group $\overline{G}$ is a unitary 2-design if and only if $R$ has two orbits  on $\bbF_p^{2n}$ and is transitive on  $\bbF_p^{2n*}$. The group $\overline{G}$
	is a unitary 3-design if and only if $R$ has five orbits on $(\bbF_p^{2n})^{\times 2}$ when $n=1$ and six orbits when $n\geq2$; that is,  $R$ has two orbits 
	on $(\bbF_p^{2n*})^{\times 2}$ and is  2-transitive when $n=1$, and it has three orbits  and rank-3 when $n\geq2$.
\end{proof}

\section{Proof of \thref{thm:Clifford3designM}}
\begin{proof}
	Suppose $\overline{G}$ is a subgroup of the $n$-qubit Clifford group that
	is a unitary 3-design. Let $\overline{H}=\overline{G}\phw$ and $R=\overline{H}/\phw$.
	Then $\overline{H}$ is also a unitary 3-design. By \lref{lem:FPformula},
	$R$ is 2-transitive when $n=1$ and has  rank-3  when $n\geq2$.
	According to \thref{thm:Rank3subSp} below, $R=\Sp{2n}{2}$ or $n=2$ and $R=A_6$ (the conclusion for $n=1$ follows from the observation that $\Sp{2}{2}\simeq S_3$). Here  $S_m$ and $A_m$ denote  the symmetric group and alternating group on $m$ letters. To complete the proof it remains to show that
	$\overline{G}\geq\phw$. Suppose this is not the case; then $\overline{G}$
	cannot contain any nontrivial displacement operator given that $R$ is transitive. Therefore, $\overline{G}$ is a complement
	of the HW group and is isomorphic to $\Sp{2n}{2}$  when
	$n\neq2$; when $n=2$, $\overline{G}$ is  isomorphic  to either  $\Sp{4}{2}$
	or $A_6$. However, any unitary 3-design in dimension $d$ has at least  $d^2(d^4
	- 3d^2 + 6)/2$ elements \cite{RoyS09}, which means 20 and 1712 elements for
	dimensions  2 and 4 and leads to a contradiction when $n=1,2$. In addition,
	the HW group  is not complemented in the $n$-qubit Clifford group when $n\geq2
	$ according to Theorem~7 in \rcite{BoltRW61II}. This contradiction confirms
	\thref{thm:Clifford3designM}. 
\end{proof}

\begin{theorem}\label{thm:Rank3subSp}
	The group $\Sp{2n}{2}$ with $n\geq2$ has no proper rank-3 subgroup except
	when $n=2$, in which case it has a unique proper rank-3 subgroup, that
	is, the alternating group $A_6$ embedded in $\Sp{4}{2}$.
\end{theorem}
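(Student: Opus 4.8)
The plan is to convert the group-theoretic hypothesis into a transitivity statement about pairs of vectors and then feed it into the classification of Cameron and Kantor. Write $V=\bbF_2^{2n}$ and let $R\le\Sp{2n}{2}$ act on the set $V^*=\bbF_2^{2n*}$ of nonzero vectors. For a fixed $v\in V^*$ the stabilizer $\Sp{2n}{2}_v$ has exactly three orbits on $V^*$, namely $\{v\}$ together with $\Delta_0(v)=\{w\neq v:\langle v,w\rangle=0\}$ and $\Delta_1(v)=\{w:\langle v,w\rangle=1\}$, both of which are nonempty for $n\geq2$ (their sizes are $2^{2n-1}-2$ and $2^{2n-1}$). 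Every $R_v$-orbit refines an $\Sp{2n}{2}_v$-orbit, and $\{v\}$ is forced to be a singleton orbit, so the rank-3 hypothesis (three $R_v$-orbits in all) leaves exactly two $R_v$-orbits to cover $\Delta_0(v)\cup\Delta_1(v)$. Since no orbit can cross between the two distinct nonempty sets $\Delta_0(v)$ and $\Delta_1(v)$, this forces $R_v$ to be transitive on each of them. Hence a proper rank-3 subgroup $R$ has exactly the same orbitals as $\Sp{2n}{2}$: it is transitive on $V^*$, and $R_v$ is transitive on both $\Delta_0(v)$ and $\Delta_1(v)$.

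Next I would read the condition geometrically. Transitivity of $R$ on the ordered pairs $(v,w)$ with $\langle v,w\rangle=1$ is precisely transitivity on the non-incident point–hyperplane pairs of $\mrm{PG}(2n-1,2)$ associated with the symplectic polarity $w\mapsto w^{\perp}$, since $v\notin w^{\perp}\Leftrightarrow\langle v,w\rangle=1$. This places $R$ in the setting classified by Cameron and Kantor \cite{CameK79,CameK02}: a collineation group that preserves a symplectic polarity and is transitive on such antiflags either contains the full group $\Sp{2n}{2}$ or appears in their explicit finite list of exceptions. As $R$ is assumed proper, the first alternative is excluded, so $R$ must be exceptional.

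It then remains to sift the Cameron–Kantor list, keeping only entries that (i) genuinely preserve the symplectic form, i.e. lie inside $\Sp{2n}{2}$, and (ii) are rank-3 rather than merely antiflag-transitive, i.e. have $R_v$ also transitive on $\Delta_0(v)$. The sole survivor is the low-dimensional case $n=2$. Here the exceptional isomorphism $\Sp{4}{2}\cong S_6$ turns the action on the $2^4-1=15$ points of $V^*$ into the action of $S_6$ on the $\binom{6}{2}=15$ two-element subsets of $\{1,\dots,6\}$ (the triangular graph $T(6)$). Its unique index-two subgroup $A_6$ is still transitive on the $15$ points with point-stabilizer orbits of sizes $8$ and $6$, hence is a proper rank-3 subgroup; any other proper subgroup of $S_6$ either is intransitive or fails to produce these subdegrees, which gives uniqueness. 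For $n\geq3$ no exception survives, completing the proof.

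I expect the real work, and the main obstacle, to be this third step: discarding the entries of the Cameron–Kantor list that are only antiflag-transitive of rank $2$ (for instance the two-transitive action of $A_7\le\PSL{4}{2}$ on $15$ points, which in any case lies outside $\Sp{4}{2}$), discarding those that do not embed in $\Sp{2n}{2}$, and confirming that nothing beyond $A_6\le\Sp{4}{2}$ remains for any $n$. The low-rank exceptional isomorphisms $\Sp{2}{2}\cong S_3$ and $\Sp{4}{2}\cong S_6$ have to be handled directly, and one must verify that the candidate $A_6$ really does split the non-fixed points into the two symplectic classes $\Delta_0,\Delta_1$ rather than into some other two-block partition; an alternative check here is to match the subdegrees $(1,\,2^{2n-1}-2,\,2^{2n-1})$ of the symplectic graph against the classification of primitive rank-3 groups, which rules out affine and grid types because $2^{2n}-1$ is neither a prime power nor a perfect square.
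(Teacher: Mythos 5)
Your first two steps are sound and match the paper's own route: the rank-3 hypothesis forces the point stabilizer $R_v$ to be transitive on each of $\Delta_0(v)$ and $\Delta_1(v)$ separately (since $\{v\}$ is a singleton orbit and no $R_v$-orbit can straddle the two symplectic classes), and transitivity on $\Delta_1(v)$ is exactly antiflag transitivity for the polarity $w\mapsto w^\perp$. This is precisely the content of the paper's two preparatory lemmas, except that you never verify \emph{primitivity}, which is one of the hypotheses of the Cameron--Kantor theorem as the paper invokes it (Theorem~2.2 of \cite{CameK02}); the paper checks it by noting that a nontrivial block would have size $2^{2n-1}\pm1$, which cannot divide $2^{2n}-1$. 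That omission is minor and easily repaired.

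The genuine gap is the step you yourself flag as ``the real work'': you assert that the only survivor of the Cameron--Kantor list is $A_6\le\Sp{4}{2}$, but you neither name the competing entries nor eliminate them. The exceptions that actually have to be killed are not $A_7\le\PSL{4}{2}$ (which is $2$-transitive and irrelevant here) but two genuinely symplectic cases in $\Sp{6}{2}$: $R\simeq\PSU{3}{3}$ and $G_2(2)\le R\le\Aut(G_2(2))$. Without disposing of these, the theorem is not proved for $n=3$. The paper's elimination is short but essential: since $R_v$ has orbits of lengths $2^{2n-1}-2$ and $2^{2n-1}$, its order is divisible by their least common multiple $2^{2n-1}(2^{2n-2}-1)$, so $|R|$ is divisible by $2^{2n-1}(2^{2n-2}-1)(2^{2n}-1)=30240$ for $n=3$; this is incompatible with $|\PSU{3}{3}|=6048$ and with $|G_2(2)|=12096$ (and $G_2(2)$ has trivial outer automorphism group, so there is no larger $R$ in case~3). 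You would need to supply this, or some equivalent arithmetic, to close the argument; your proposed fallback via the classification of primitive rank-3 groups is a much heavier tool and is not needed once this divisibility observation is in place.
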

This  theorem follows from the work of  Cameron and Kantor \cite{CameK79,CameK02},  although it is not easy to spot an explicit statement in these references. To prove this theorem, we need to introduce a few auxiliary concepts and results.
A subgroup of $\Sp{2n}{2}$ is \emph{primitive} if it acts transitively on nonzero vectors in  $\bbF_2^{2n}$ and preserves no nontrivial partition.
It is \emph{antiflag transitive} if it acts transitively
on  all pairs $(\mu, H)$ of nonzero vectors and hyperplanes in $\bbF_2^{2n}$ with $\mu\notin H$ \cite{CameK79,CameK02}.
\begin{lemma}
	Any rank-3 subgroup of $\Sp{2n}{2}$ for  $n\geq2$ is primitive. 
\end{lemma}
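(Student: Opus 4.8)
The plan is to reduce primitivity to a divisibility check via the standard block/suborbit correspondence. Recall that a transitive group on $\bbF_2^{2n}\setminus\{0\}$ is primitive exactly when it admits no nontrivial block of imprimitivity, and that any block $B$ containing a point $\mu$ is automatically a union of orbits of the point stabilizer $R_\mu$ (the suborbits): if $g\in R_\mu$ then $\mu\in B\cap B^g$, so $B^g=B$, whence $B$ is $R_\mu$-invariant. Thus the whole argument hinges on first determining the suborbits of a rank-3 subgroup $R\le\Sp{2n}{2}$ precisely, and then eliminating the few resulting candidate blocks on size grounds.

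First I would pin down the suborbits of the ambient group $\Sp{2n}{2}$. By Witt's lemma \cite{Came00book} the symplectic group is transitive on vectors of each fixed isometry type relative to a fixed nonzero $\mu$; over $\bbF_2$ this yields exactly three orbits of $\Sp{2n}{2}_\mu$ on $\bbF_2^{2n}\setminus\{0\}$, namely $\{\mu\}$, the set $\Delta_1=\mu^\perp\setminus\{0,\mu\}$ of size $2^{2n-1}-2$, and the set $\Delta_2=\{\nu:\langle\mu,\nu\rangle=1\}$ of size $2^{2n-1}$ (the $p=2$ instance of the orbit count already carried out in the proof of \thref{thm:Clifford3design}). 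Hence $\Sp{2n}{2}$ itself has rank $3$. Since $R\le\Sp{2n}{2}$, the orbits of $R_\mu$ refine those of $\Sp{2n}{2}_\mu$; but $R$ is assumed to be rank $3$, so $R_\mu$ has exactly three orbits, and a refinement of a three-part partition into three parts is that same partition. Therefore $R_\mu$ has the identical suborbits $\{\mu\},\Delta_1,\Delta_2$, and in particular is transitive on each of $\Delta_1$ and $\Delta_2$.

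The remaining step is the size check. The only nontrivial blocks through $\mu$ are the two unions of suborbits $B_1=\{\mu\}\cup\Delta_1=\mu^\perp\setminus\{0\}$, of size $2^{2n-1}-1$, and $B_2=\{\mu\}\cup\Delta_2$, of size $2^{2n-1}+1$. A block size must divide the degree $|\bbF_2^{2n}\setminus\{0\}|=2^{2n}-1$. Writing $2^{2n}-1=2(2^{2n-1}-1)+1=2(2^{2n-1}+1)-3$, divisibility by $|B_1|$ would force $2^{2n-1}-1\mid 1$ and divisibility by $|B_2|$ would force $2^{2n-1}+1\mid 3$; for $n\ge2$ both are impossible. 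Hence $R$ has no nontrivial block and is primitive. (For $n=1$ these inequalities degenerate into equalities $2^{2n-1}-1=1$ and $2^{2n-1}+1=3$, reflecting that $\Delta_1$ is then empty and $\Sp{2}{2}$ is only rank $2$ on its three nonzero vectors; this is exactly why the hypothesis $n\ge2$ is imposed.)

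I expect the only genuine subtlety to be the orbit-matching step: extracting the three $\Sp{2n}{2}_\mu$-orbits from Witt's lemma over $\bbF_2$ and then invoking the rank-3 hypothesis to force $R_\mu$ to share them. After that the proof is purely arithmetic, since the candidate block sizes $2^{2n-1}\pm1$ simply fail to divide $2^{2n}-1$. The substantive structural work — classifying which rank-3 groups can actually arise — is deferred to \thref{thm:Rank3subSp}, so this lemma serves mainly to organize the reduction to primitive groups.
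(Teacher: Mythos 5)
Your proof is correct and follows essentially the same route as the paper's: identify the two point-stabilizer suborbits of sizes $2^{2n-1}-2$ and $2^{2n-1}$, observe that a nontrivial block through $\mu$ must then have size $2^{2n-1}\pm1$, and rule this out because neither divides $2^{2n}-1$. You simply spell out two steps the paper leaves implicit (the block--suborbit correspondence and the rank-3 forcing of the suborbits via Witt's lemma), which is fine.
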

\begin{proof}
	Let $R$ be a rank-3 subgroup of $\Sp{2n}{2}$; then $R$ is transitive, and
	the point stabilizer of any nonzero vector $\mu$ partitions the remaining
	nonzero vectors into two orbits according to their symplectic products with
	$\mu$. Therefore, the stabilizer has two orbits on $\bbF_2^{2n*}$ of lengths $2^{2n-1}-2$ and $2^{2n-1}$, respectively. If $R$ is not primitive, then  any block in a nontrivial partition has size either $2^{2n-1}-1$ or $2^{2n-1}+1$. On the other hand, the  size must be a divisor of $2^{2n}-1$. This contradiction shows that $R$ is primitive.
\end{proof}

\begin{lemma}
	Any rank-3 subgroup of $\Sp{2n}{2}$ for $n\geq2$  is antiflag transitive. 
\end{lemma}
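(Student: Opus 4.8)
The plan is to translate antiflag transitivity into a statement about the symplectic form and then read it off directly from the rank-3 hypothesis. Over $\bbF_2$ the symplectic form is nondegenerate and symmetric, so the map $w\mapsto w^\perp:=\{\nu:\langle w,\nu\rangle=0\}$ is a bijection between nonzero vectors and hyperplanes, and it is $R$-equivariant because every $g\in R\leq\Sp{2n}{2}$ preserves symplectic products, whence $g(w^\perp)=(gw)^\perp$. Under this identification the condition $\mu\notin A=w^\perp$ becomes $\langle \mu,w\rangle=1$. Thus I would reduce the claim to showing that $R$ is transitive on the set of ordered pairs $(\mu,w)$ of nonzero vectors with $\langle \mu,w\rangle=1$.

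First I would record the orbit structure of a point stabilizer. Since $R$ is rank-3 it is transitive on $\bbF_2^{2n*}$, and for fixed nonzero $\mu$ the stabilizer $R_\mu$ has exactly two orbits on the remaining nonzero vectors. As $R_\mu\leq\Sp{2n}{2}$ preserves the value of $\langle \mu,\cdot\rangle$, each of these orbits lies within one of the two symplectic-product classes $\{\nu\neq\mu:\langle \mu,\nu\rangle=0\}$ and $\{\nu:\langle \mu,\nu\rangle=1\}$, of sizes $2^{2n-1}-2$ and $2^{2n-1}$. Both classes are nonempty for $n\geq2$, so by counting the two $R_\mu$-orbits must coincide with these two classes. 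In particular $R_\mu$ is transitive on $\{w:\langle \mu,w\rangle=1\}$, which is exactly the set of $w$ whose hyperplane $w^\perp$ excludes $\mu$.

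Transitivity on antiflags then follows in one step. Given two pairs $(\mu_1,w_1)$ and $(\mu_2,w_2)$ with $\langle \mu_i,w_i\rangle=1$, transitivity of $R$ on nonzero vectors supplies $g\in R$ with $g\mu_1=\mu_2$; then $\langle \mu_2,gw_1\rangle=\langle g\mu_1,gw_1\rangle=\langle \mu_1,w_1\rangle=1$, so $gw_1$ and $w_2$ both lie in the non-perpendicular $R_{\mu_2}$-orbit, and some $h\in R_{\mu_2}$ sends $gw_1$ to $w_2$. The element $hg$ maps $(\mu_1,w_1)$ to $(\mu_2,w_2)$, completing the argument.

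I do not expect a genuine obstacle here: once the correspondence between hyperplanes and nonzero vectors is set up, the statement is essentially a reformulation of the rank-3 property. The only point demanding care is the identification of the two $R_\mu$-orbits with the two symplectic-product classes, which relies on both classes being nonempty (true precisely because $n\geq2$) and on $R_\mu$ preserving the form; this is already implicit in the proof of the preceding lemma and can simply be invoked.
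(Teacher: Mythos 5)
Your proof is correct and follows essentially the same route as the paper: identify hyperplanes with nonzero vectors via $w\mapsto w^\perp$, note the $R$-equivariance of this correspondence, and use the fact that the two orbits of a point stabilizer coincide with the two symplectic-product classes to conclude antiflag transitivity. You spell out the orbit identification and the final two-step transitivity argument more explicitly than the paper does, but the underlying idea is identical.
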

\begin{proof}
	Let $R$ be a rank-3 subgroup; then $R$ is transitive and the point stabilizer of any nonzero vector $\mu$ has two orbits on the remaining nonzero vectors. 
	Denote by $\mu^\bot$ the hyperplane  composed of all vectors that are orthogonal to $\mu$ with respect to the given symplectic product. Then the map $\mu\mapsto\mu^\bot$ sets a one-to-one correspondence between vectors and hyperplanes, which is preserved by the symplectic group, that is,  $F\mu^\bot=(F\mu)^\bot$ for any $F\in\Sp{2n}{2}$. Let $\nu_1^\bot$ and $\nu_2^\bot$ be two hyperplanes that do not contain $\mu$, that is $\langle \mu, \nu_1\rangle=\langle \mu, \nu_2\rangle=1$. Then the point stabilizer of $\mu$ within $R$ can map $\nu_1$ to $\nu_2$ and, accordingly, $\nu_1^\bot$ to $\nu_2^\bot$. So $R$ is antiflag transitive. 
\end{proof}

\begin{proof}[Proof of \thref{thm:Rank3subSp}]
	Let $R$ be a  rank-3 subgroup of $\Sp{2n}{2}$ with $n\geq2$; then $R$ is primitive and antiflag transitive, but not 2-transitive.  Now \thref{thm:Rank3subSp}  follows from Theorem~5.2 and Proposition~6.2 in \rcite{CameK79}. Alternatively, it can be proved based on
	Theorem~2.2 in \rcite{CameK02}.
\end{proof}

\section{Proof of \thref{lem:u3design}}
\begin{proof}
	In view of \thref{thm:Clifford3design}, it suffices to prove the  statement that
	no operator basis is covariant with respect to a unitary group 3-design.
	Suppose on the contrary that $\{L_j\}$ is an operator basis on the Hilbert space  $\caH$ of dimension $d$ that is covariant with respect to a unitary group 3-design $\overline{G}$.
	Then $\Phi_2(\overline{G})=2$ and $\Phi_3(\overline{G})=6$ ($\Phi_3(\overline{G})=5$ when $d=2$) according to \eref{eq:FramePmin}. Note that $\{L_j\otimes L_k\}$ and $\{L_j\otimes L_k\otimes L_l\}$ form operator bases for $\caH^{\otimes2}$ and $\caH^{\otimes3}$, respectively. 
	According to Lemma 1 in \rcite{Zhu16P} (cf.~Lemma~7.2 in \rcite{Zhu12the}), $\overline{G}$ acts  transitively
	on ordered pairs of distinct operators in $\{L_j\}$ and has two orbits (one orbit when $d=2$) on ordered triples. The triple products $\tr(L_jL_kL_l)$ for distinct $j,k,l$ must all be equal and thus real when  $d=2$, while they can take on at most two different values when $d\geq3$.   
	
	However, these triple products cannot all be real since, otherwise, the basis operators would commute with each other and thus cannot form an operator basis.
	When $d=2$, this contradiction confirms the theorem. When $d\geq 3$,
	these triple products
	must take on two distinct values, which are complex conjugates of each other. Consequently,  $\overline{G}$ acts transitively on unordered triples; in other words, $\overline{G}$ is 3-homogeneous in the language of permutation groups \cite{DixoM96book,Came99book, Zhu15S}.
	According to Theorem 1 of Kantor~\cite{Kant72} (see also Theorem 9.4B in \rcite{DixoM96book} and  Lemma~2 in \rcite{Zhu15S}),  any 3-homogeneous permutation group  on  $m$ objects with $m\geq9$ a perfect square is $3$-transitive. Therefore, $\overline{G}$ acts transitively on ordered triples,  which means all triple products  $\tr(L_jL_kL_l)$ are real, in contradiction with the previous observation.
\end{proof}

\nocite{apsrev41Control}

\bibliographystyle{apsrev4-1}

\bibliography{all_references}

\end{document}